\let\csname proof\endcsname\relax
\let\csname endproof\endcsname\relax
\newcommand{\tgao}[1]{{\color{blue}{[tgao: ##]}}}
\begin{document}
\title{Privacy-Preserving Discretized Spiking Neural Networks}
%
\titlerunning{Privacy-Preserving Discretized Spiking Neural Networks}
%
\author{Pengbo Li\inst{1} \and
Ting Gao\inst{1}\thanks{Corresponding author: tgao0716@hust.edu.cn} \and
Huifang Huang\inst{2} \and
Jiani Cheng\inst{1} \and
Shuhong Gao\inst{3} \and
Zhigang Zeng\inst{4} \and
Jinqiao Duan\inst{5}}
%

\institute{Center for Mathematical Sciences, Huazhong University of Science and Technology, Wuhan 430074, China \and School of Mathematics and Statistics, Huazhong University of Science and Technology, Wuhan 430074, China \and School of Mathematical and Statistical Sciences, Clemson University, Clemson SC 29634-0975 U.S.A \and School of Artificial Intelligence and Automation, Huazhong University of Science and Technology, Wuhan 430074, China \and  
Greater Bay University, Dongguan 523830, China}
\maketitle    

\begin{abstract}

The rapid development of artificial intelligence has brought considerable convenience, yet also introduces significant security risks. One of the research hotspots is to balance data privacy and utility in the real world of artificial intelligence. The present second-generation artificial neural networks have made tremendous advances, but some big models could have really high computational costs. The third-generation neural network, SNN (Spiking Neural Network), mimics real neurons by using discrete spike signals, whose sequences exhibit strong sparsity, providing advantages such as low energy consumption and high efficiency. In this paper, we construct a framework to evaluate the homomorphic computation of SNN named FHE-DiSNN that enables SNN to achieve good prediction performance on encrypted data. First, benefitting from the discrete nature of spike signals, our proposed model avoids the errors introduced by discretizing activation functions. Second, by applying bootstrapping, we design new private preserving functions \textbf{FHE-Fire} and \textbf{FHE-Reset}, through which noise can be refreshed, allowing us to evaluate SNN for an arbitrary number of operations. Furthermore, We improve the computational efficiency of FHE-DiSNN while maintaining a high level of accuracy. Finally, we evaluate our model on the MNIST dataset. The experiments show that FHE-DiSNN with 30 neurons in the hidden layer achieves a minimum prediction accuracy of 94.4\%. Under optimal parameters, it achieves a 95.1\% accuracy, with only a 0.6\% decrease compared to the original SNN (95.7\%). These results demonstrate the superiority of SNN over second-generation neural networks for homomorphic evaluation.

\keywords{Privacy Computing  \and  Fully Homomorphic Encryption  \and  Spiking Neural Network \and Bootstrapping}
\end{abstract}

\section{Introduction}

\textbf{Privacy-Preserved AI.}  Machine learning algorithms based on deep neural networks have attracted extensive attention as a key technology in Artificial Intelligence (AI). These achievements have been widely applied in various fields such as image processing, intelligent transportation, and security. However, users face challenges of insufficient local computing power when training neural network models with a large number of parameters, which leads to the consideration of MLaaS(Machine Learning as a Service)\cite{ribeiroMLaaSMachineLearning2015} to outsource the computation of neural network models to cloud services. However, outsourcing brings risks of data security breaches. To address this issue, many privacy protection techniques are applied to machine learning models, such as homomorphic encryption (HE), differential privacy (DP), and secure multi-party computation (SMC) based on cryptography.

Homomorphic encryption refers to the ability to perform arbitrary computations on ciphertext without decryption. This unique property enables homomorphic encryption to have broad theoretical and practical applications, such as secure encrypted retrieval in cloud computing and secure multi-party computation. Therefore, researching homomorphic encryption holds significant scientific and practical value. In 2009, Gentry\cite{gentryFullyHomomorphicEncryption2009,gentryComputingArbitraryFunctions2010a} constructs the first fully homomorphic encryption (FHE) scheme, which is a major breakthrough in the field of cryptography.
So far, there have been four generations of FHE. In the first generation \cite{gentryComputingArbitraryFunctions2010a}, Gentry constructs a true bootstrapping process, although its practical performance is poor. The second-generation scheme, represented by BFV \cite{brakerskiLeveledFullyHomomorphic2014} and BGV \cite{fanSomewhatPracticalFully2012a}, introduces a technique called modulus reduction, which builds leveled HE schemes that can compute addition and multiplication of predefined depth. Another advantage of the second-generation scheme is the SIMD operation, allowing parallel processing of thousands of plaintexts in corresponding ciphertext slots, greatly improving the scheme's performance. 
CKKS \cite{cheonHomomorphicEncryptionArithmetic2017} is a modification of BFV schemes that supports homomorphic real number operations with fixed precision. The third-generation schemes include FHEW \cite{ducasFHEWBootstrappingHomomorphic2015}, TFHE \cite{chillottiTFHEFastFully2020} and Gao et al.\cite{gaoEfficientFullyHomomorphic2018, caseFullyHomomorphicEncryption2019} that have fast bootstrapping and enable an unlimited number of operations.

Although there are many works based on the early second-generation FHE, it only supports homomorphic operations of addition and multiplication, while practical computations often involve non-linear operations such as comparison and maximization, especially activation functions in neural networks. To address these issues, Gilad-Bachrach et al. \cite{gilad-bachrachCryptoNetsApplyingNeural2016} propose the CryptoNets method, which replaces non-linear activation functions with polynomial functions. However, polynomials of a high degree are needed for a good approximation of nonlinear functions used in machine learning. Mohassel et al. \cite{mohasselSecureMLSystemScalable2017} introduce an interactive algorithm that utilizes two servers to handle non-linear function problems, but it requires continuous interaction between the client and the servers, leading to high communication costs. Chabanne et al. \cite{chabannePrivacyPreservingClassificationDeep2017} modify the model for the prediction phase to address the non-linear activation function problem, but this approach results in a loss of precision in prediction and training results.

In \cite{bourseFastHomomorphicEvaluation2018}, the authors design FHE-DiNN, a discrete neural network framework based on the third-generation TFHE \cite{chillottiTFHEFastFully2020} scheme, where the output of each neuron is refreshed through bootstrapping, enabling homomorphic computation for arbitrary depths of networks. Unlike standard neural networks, FHE-DiNN utilizes a discretized neural network that restricts the propagated signals to integers and employs the sign function as the activation function to achieve scale invariance. FHE-DiNN exhibits fast computation speed but has lower model prediction accuracy. This work inspires us to consider whether SNN neurons that naturally output 0 and 1 binary values can also be efficiently homomorphically evaluated.
\\
\\
\textbf{Spiking Neural Network.}  Compared to other neural network models, Spiking Neural Networks (SNN) are generally more reliable in biological interpretation. As the third generation of neural networks\cite{Maass1996NetworksOS}, SNNs have gained increasing attention due to their rich spatiotemporal neural dynamics, diverse coding mechanisms, and low-power advantages in neuromorphic chips.

In contrast to the prosperity of artificial neural networks (ANNs), the development of SNNs is still in the early stage. Currently, researches in SNNs mainly focus on five major directions: neuron models, training algorithms, programming frameworks, datasets, and hardware chips. In response to the dynamic characteristics of the potential of neurons, neurophysiologists have constructed many models. These models are the basic units that make up spiking neural networks and determine the basic dynamic characteristics of the network. Among them, the most influential models include the Hodgkin-Huxley (H-H) model \cite{Hodgkin1952AQD}, the leaky integrate-and-fire (LIF) model \cite{Wu2017SpatioTemporalBF}, the Izhikevich model \cite{Izhikevich2003SimpleMO}, and the spike response model \cite{jolivetSpikeResponseModel2003}(SRM), etc.

The training algorithms of SNNs can be mainly divided into three types: (1) gradient-free training algorithms represented by spike-timing dependent plasticity (STDP) \cite{Izhikevich2007SolvingTD}; (2) direct conversion of ANNs to SNNs; (3) gradient-surrogate training algorithms represented by error back-propagation in the spatiotemporal domain. Bohte et al.\cite{Boht2000ErrorbackpropagationIT} first propose a gradient descent learning algorithm that can be applied to multi-layer feed-forward spiking Neural networks, called the SpikeProp learning algorithm. Recently, Wu et al.\cite{Wu2017SpatioTemporalBF}propose the spatiotemporal back propagation (STBP) method for the direct training of SNNs, and significantly improve it in order to be compatible with a much deeper structure, larger dataset, and better performance. 

Considering the superior stability and lower energy consumption of SNNs in handling discrete data, it is reasonable to explore the integration of SNNs with FHE. The advantage of FHE-DiSNN lies in its strong robustness to discretization. In the process of converting traditional ANNs to homomorphic computation, discretizing the activation function is a challenging problem. Whether it is approximating with low-degree polynomials\cite{gilad-bachrachCryptoNetsApplyingNeural2016} or directly setting it as the sign function (in DiNN), both methods result in a loss of accuracy. SNN, on the other hand, naturally avoids this problem since all its outputs are binary pulse signals taking values from {0,1}. This property also satisfies the scale-invariant property, eliminating the need to consider the influence of computation depth when designing the discretization. Inspired by FHE-DiNN, we also provide discretization methods for linear neuron models such as LIF and IF and prove that the discretization error caused by this method is very small.
\\
\\
\textbf{Our Contribution.}
In this paper, we construct a novel framework called FHE-DiSNN with the following benefits: 
\begin{itemize}
    \item[$\bullet$] develop a low-accuracy-loss method to discretize SNN to DiSNN with controllable error.
    \item[$\bullet$] design new private preserving functions \textbf{FHE-Fire} and \textbf{FHE-Reset} with TFHE bootstrapping technology so that the resulting FHE-DiSNN constructed from DiSNN can have an arbitrary number of operations.
    \item[$\bullet$] propose an easy-extended framework(SNN $\rightarrow$ DiSNN $\rightarrow$ FHE-DiSNN) that allows the prediction procedure of SNN to be evaluated homomorphically.
\end{itemize}

Our experiments on the MNIST\cite{dengMNISTDatabaseHandwritten2012} dataset confirm the advantages of the FHE-DiSNN. First, we train a fully connected SNN with a single hidden layer consisting of 30 neurons. This SNN is constructed based on the IF(Integrate-and-Fire) neuron model and implemented using the Spikingjelly\cite{SpikingJelly} Python package. Then, we convert it to DiSNN with the optimal parameters determined experimentally. The experiments show that DiSNN achieves a prediction accuracy of 95.3\% on plaintext data, with only a marginal decrease of 0.4\% compared to the original SNN's accuracy of 95.7\%. Finally, the accuracy of FHE-DiSNN is evaluated on ciphertext using the TFHE library, resulting in an accuracy rate of 95.1\%. This demonstrates a slight degradation (0.2\%, 0.6\%) compared to both DiSNN (95.3\%) and SNN (95.7\%).
\\
\\
\textbf{Outline of the paper.}
The paper is structured as follows: In Section 2, we provide definitions and explanations of SNN and TFHE, including a brief introduction to the bootstrapping process of TFHE. In Section 3, we present our method of constructing Discretized Spiking Neural Networks and prove that the discretization error can be controlled. In Section 4, we highlight the challenges of evaluating a DiSNN homomorphically and provide a detailed explanation of our proposed solution. In Section 5, we present comprehensive experimental results for verification of our proposed framework. And discuss the challenges and possible future work in section 6.

\section{Preliminary Knowledge}
In this chapter, we commence by presenting the training and prediction methods of the SNN model. Subsequently, we provide a concise introduction to the bootstrapping process in the TFHE scheme.

\subsection{Spiking Neural Network}
The typical structure of a neuron predominantly encompasses three components: dendrites, soma (cell body), and axons. In consideration of the neuron's potential dynamic characteristics during its operation, neurophysiologists have devised diverse models that constitute the foundational constituents of spiking neural networks, thereby exerting influence on the network's fundamental dynamic properties. 

The Hodgkin-Huxley (H-H) model provides a comprehensive and accurate depiction of the intricate electrical activity mechanisms in neurons. However, it entails a complex and extensive system of dynamic equations that impose substantial computational demands, so simplified models remain practical and valuable such as the most widely utilized Leaky Integrate-and-Fire (LIF) model. LIF model simplifies the process of action potentials significantly while retaining three key characteristics: leakage, accumulation, and threshold excitation which are presented below:
\begin{equation}\label{eq1}
\Omega \frac{d V}{d t}=-V+I,
\end{equation}
where $\Omega=R C$ is a time constant, $R$ and $C$ denotes the membrane resistance and capacitance respectively. Building upon the foundation of LIF model, there exist, several variant models, including QIF model \cite{brunelFiringRateNoisy2003}, EIF model \cite{fourcaud-trocmeHowSpikeGeneration2003}, and adaptive EIF model \cite{bretteAdaptiveExponentialIntegrateandFire2005}. Besides, IF model \cite{Abbott1999LapicquesIO} is a further simplification of LIF model, where $\Omega=1$ and $V$ in Equation \ref{eq1} disappear, i.e.$\frac{d V}{d t}=I$.

In practical applications, it is common to utilize discrete difference equations as an approximation method for modeling the equations governing neuronal electrical activity. Although the specific accumulation equations for various neuronal membrane potentials may differ, the threshold excitation and reset equations for the membrane potential remain consistent. Consequently, the neuronal electrical activity can be simplified into three distinct stages: charging, firing, and resetting.
\begin{equation}\label{HSVt}
\begin{aligned}
H[t] & =V_{t-1} + f(V[t-1], I[t]), \\
S[t] & =\textbf{Fire}\left(H[t]-V_{threshold }\right), \\
V[t] &= \textbf{Reset}(H[t]) = \left\{ \begin{aligned} 
V_{reset}, \quad & \text{if} \quad H[t] \geq V_{threshold }, \\ 
H[t], \quad &\text{if} \quad V_{reset} \leq H[t] \leq V_{threshold }, \\ 
 V_{reset}, \quad &\text{if} \quad H[t] \leq V_{reset}.
 \end{aligned} \right.
\end{aligned}
\end{equation}
$\textbf{Fire}(\cdot)$ is a step function:
\begin{equation}
\textbf{Fire}(x) = \left\{ \begin{aligned} 
1, \quad & \text{if} \quad x \geq 0, \\
0, \quad & \text{if} \quad x \le 0. \\
\end{aligned} \right.
\end{equation}
$I[t]$(The subscript $i$ represents the i-th neuron, here we only refer to an arbitrary neuron, so $i$ can be omitted.) represents the total membrane current of the external input from the pre-synaptic neurons. This term can be conceptually interpreted as the voltage increment and mathematically calculated using the equation provided below:
\begin{equation}\label{It}
I[t] = \sum\limits_j w_{ij}S_j[t].
\end{equation}
To mitigate potential confusion, we employ the notation $H[t]$ to denote the membrane potential of the neuron subsequent to the charging phase and prior to spike initiation, while $V[t]$ signifies the membrane potential of the neuron subsequent to spike initiation. The function $f(V[t-1], X[t])$ represents the equation governing the state transition of the neuron, wherein the distinctions between different neuron models manifest in the specific formulation of $f$.

In practical applications, it is common to utilize spike encoding methods to transform image data into appropriate binary inputs format for SNN. Poisson encoding is a commonly used one, in which the inputs are encoded into rate-based spike by the $\lambda$-Poisson process. Additionally, due to the non-differentiable nature of spiking functions, the conventional back-propagation algorithm based on gradient descent in ANNs is not suitable in this context. Therefore, alternative training approaches must be sought. Poisson encoding and surrogate gradient method are utilized in this paper and other  common methodologies for encoding data and training SNN are detailed in Appendix A.

\subsection{Programmable Bootstrapping }
Let $N = 2^k$ and $p >1$ an even integer. Let $Z_p = \{-\frac{p}{2}+1,\dots,\frac{p}{2} \}$ be the ring of integer modulo $p$. Let $X^N + 1$ be the (2N)-th cyclotomic polynomial. Let $q$ be a prime and define $R_{q,N}= R/qR \equiv \mathbb{Z}_q[X] / (X^N + 1) \equiv \mathbb{Z}[X] / (X^N + 1,q)$, similarly for $R_{p,N}$. Vectors are represented by lowercase bold letters, such as $\mathbf{a}$. The $i$-th entry of a vector $\mathbf{a}$ is denoted as $a_i$. The inner product between vectors $\mathbf{a}$ and $\mathbf{b}$ is denoted by $\langle\mathbf{a}, \mathbf{b}\rangle$.  A polynomial $m(X)$ in $R_{p, N}$ corresponds to a message vector of length $N$ over $Z_p$, and the ciphertext for $m(X)$ will be a pair of polynomials in $R_{q, N}$. Detailed fully homomorphic encryption schemes have been included in Appendix B.

When referring to a probability distribution, we indicate that a value $d$ is drawn from the distribution $\mathcal{D}$ as $d \sim \mathcal{D}$.\\

\vspace{-0.3cm}
\begin{theorem}\label{bootthm}
\textup{\textbf{(Programmable bootstrapping\cite{chillottiImprovedProgrammableBootstrapping2021})}}
TFHE/FHEW bootstrapping support the computation of any function $g: Z_{p} \rightarrow Z_{p} \quad \text{and} \quad g(v+\frac{p}{2}) = -g(v) $. We refer to $g$ as the program function of bootstrapping. An LWE ciphertext $LWE_s(m) = (\mathbf{a}, b)$, where $m \in Z_p$, $\mathbf{a} \in Z_p^N$ and $b \in Z_p$, can be bootstrapped into $LWE_s(g(m))$ with very low noise.
\end{theorem}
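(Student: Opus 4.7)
The plan is to follow the classical TFHE/FHEW bootstrapping pipeline and show that, by a careful choice of an encoded ``test polynomial'', the constant coefficient of a blindly rotated polynomial evaluates to $g(m)$. The argument decomposes into three standard phases: modulus switching, blind rotation, and sample extraction, followed by a noise analysis showing that the output ciphertext is a genuinely refreshed encryption of $g(m)$.

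First I would perform a modulus switch from the input modulus $q$ down to $2N$, so that the noisy phase $b - \langle \mathbf{a}, \mathbf{s}\rangle$ lies in $\mathbb{Z}_{2N}$ and carries a rescaled version of the plaintext $m \in Z_p$ plus a small rounding error. Then I would encode $g$ as a test polynomial $T(X) \in R_{q,N}$ whose coefficients are $g$ evaluated at a suitable discretization map $\varphi : \{0,\dots,N-1\} \to Z_p$. The negacyclic identity $X^N \equiv -1 \pmod{X^N+1}$ of the ambient ring forces the encoded function to satisfy $g(v + p/2) = -g(v)$, which is precisely the hypothesis of the theorem; without this condition $T$ cannot be consistently defined over $R_{q,N}$, so the constraint in the statement is not a limitation of the proof but an intrinsic algebraic requirement.

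The second step is the blind rotation itself. Using the bootstrapping key, a list of RGSW encryptions of the bits (or small digits) of the secret $\mathbf{s}$, I would homomorphically evaluate $X^{-(b - \langle \mathbf{a}, \mathbf{s}\rangle)} \cdot T(X)$ inside an RLWE ciphertext by iterated CMux/external-product gates in the style of Chillotti et al. Each coordinate of $\mathbf{s}$ is absorbed into the exponent one at a time, and after all $N$ (or $n$) iterations the resulting RLWE ciphertext encrypts a rotation of $T$ whose constant term is exactly $g(m)$. A sample extraction then reads off this coefficient as an LWE ciphertext $LWE_s(g(m))$ under the extracted key, with an optional key-switch back to the original key.

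The final ingredient is the noise analysis: the output error is the sum of the gadget-decomposition errors accumulated over the external products plus the mod-switch rounding error, and crucially it is bounded by a quantity depending only on the bootstrapping key parameters, not on the input noise. I expect the main obstacle to be the correctness and noise control of blind rotation, namely verifying that the iterated CMux evaluation faithfully realises the monomial multiplication $X^{-\mu}$ at every step and that the cumulative error from the external products stays well below the decryption threshold of $R_{q,N}$. The test-polynomial encoding and the sample extraction are, by comparison, essentially algebraic bookkeeping once the negacyclic condition on $g$ is imposed.
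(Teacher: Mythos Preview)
Your proposal is correct and follows essentially the same route as the paper: the paper does not prove this theorem but simply cites it and sketches the standard TFHE/FHEW pipeline as Initialize (test polynomial), Blind Rotation via the accumulator, Sample Extraction, and Key Switching, which is exactly the decomposition you outline (with the additional explicit mention of modulus switching and a noise analysis). Your account is in fact more detailed than the paper's, which treats the result as an imported black box from \cite{chillottiImprovedProgrammableBootstrapping2021} and \cite{micciancioBootstrappingFHEWlikeCryptosystems2020}.
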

This process relies on the Homomorphic Accumulator\cite{micciancioBootstrappingFHEWlikeCryptosystems2020} denoted as $ACC_g$.Using the notations of \cite{micciancioBootstrappingFHEWlikeCryptosystems2020}, the bootstrapping process can be broken down into the following steps:

-\textbf{Initialize}: Set the initial polynomial:
\begin{equation}
ACC_g[-b] = X^{-b} \cdot \sum\limits_{i=0}^{N-1} g\left(\left\lfloor \frac{i\cdot p}{2N}\right\rfloor\right)X^{i} \bmod X^N+1.    
\end{equation}

-\textbf{Blind Rotation}: $ACC_g \leftarrow_{+}^{+} -a_i \cdot ek_i$, modifies the content of the accumulator from $ACC_g[-b]$ to $ACC_g[-b+\sum a_i s_i] = ACC_g[-m - e]$, where 
$$\text{ek} = \left(RGSW\left(X^{s_1}\right), \ldots, RGSW\left(X^{s_n}\right)\right) 
,$$
which is a list of materials over $R_q^N$.

-\textbf{Sample Extraction}: $ACC_g = (a(X), b(X))$ is the RLWE ciphertext with component polynomials $a(X) = \sum\limits_{0\le i \le N-1} a_i X^i$ and $b(X) = \sum\limits_{0\le i \le N-1} b_i X^i$. The extraction operation outputs the LWE ciphertext:
$$
RLWE_z \stackrel{\text{Sample Extraction}}{\longrightarrow}LWE_z(g(m)) = (\mathbf{a}, b_0)
,$$
where $\mathbf{a} = (a_0, \ldots, a_{N-1})$ is the coefficient vector of $a(X)$, and $b_0$ is a coefficient of $b(X)$.

-\textbf{Key Switching}: Key switching transforms the LWE instance's key from the original vector $\mathbf{z}$ to the vector $\mathbf{s}$ without changing plaintext message $m$:
$$
LWE_{\mathbf{z}}(g(m)) \stackrel{\text{Key Switching}}{\longrightarrow} LWE_{\mathbf{s}}(g(m)).
$$

Taking a bootstrapping key and a key switching key as input, bootstrapping can be defined as:
\begin{equation}
\text{bootstrapping} = \textbf{KeySwitch} \circ \textbf{Extract} \circ \textbf{BlindRotate} \circ \textbf{Initialize}    
\end{equation}

With program function $g$, bootstrapping takes ciphertext $LWE_s(m)$ as input, and output $LWE_s(g(m))$ with the original secret key $s$:
\begin{equation}
\text{bootstrapping}(LWE_s(m)) = LWE_s(g(m)).    
\end{equation}

This property will be extensively utilized in our context. Since bootstrapping does not alter the secret key, we will use the shorthand $LWE(m)$ to refer to an LWE ciphertext in the rest.

\section{Discretized Spiking Neural Network}
There are two parts to this section. Firstly, we present a simple discretization method to convert SNNs into Discretized Spiking Neural Networks(DiSNNs). We demonstrate that this method guarantees controllable errors for both the IF neuron model and the LIF neuron model. Furthermore, we provide estimations for the extrema of these two discretization models which can be used to determine the size of the plaintext space. Secondly, we propose an efficient method for computing the \textbf{Fire} and \textbf{Reset} functions of the SNN neuron model on the ciphertext, denoted as \textbf{FHE-Fire} and \textbf{FHE-Reset}.

\begin{definition}
A Discretized Spiking Neural Network (DiSNN) is a type of feed-forward spiking neural network in that all weights are discretized into a finite $Z_p$, as well as the inputs and outputs of the neuron model.
\end{definition}

We denote this discretization method as the function:
\begin{equation}
\hat{x} \triangleq \text{Discret}(x, \tau) = \lfloor x \cdot \tau \rceil,
\end{equation}
where $\hat{x}$ represents the value $x$ after discretization and the precision of the discretization can be controlled, with a larger $\tau$ resulting in finer discretization. The equation \ref{HSVt} can be discretized as follows($i$ is omitted like Equation\ref{It}):
\begin{equation}
\begin{aligned}
\hat{I}[t] &= \varSigma  \hat{\omega}_{ij}S_{j}[t], \\
\hat{H}[t] & = \hat{V}[t-1] + f(\hat{V}[t-1],\hat{I}[t]), \\
S[t] & = \textbf{Fire}\left(\hat{H}[t]-\hat{V}_{threshold }\right), \\
\hat{V}[t] &= \textbf{Reset}(\hat{H}[t])= \left\{ \begin{aligned} 
\hat{V}_{reset}, \quad & \text{if} \quad \hat{H}[t] \geq \hat{V}_{threshold }, \\ 
\hat{H}[t], \quad & \text{if} \quad \hat{V}_{reset} \leq \hat{H}[t] \textless \hat{V}_{threshold }, \\ 
\hat{V}_{reset}, \quad & \text{if} \quad \hat{H}[t] \leq \hat{V}_{reset}.
 \end{aligned} \right.
\end{aligned}
\end{equation}
This system of equations clearly shows the advantages of SNN in terms of discretization methods. The binary spike signals with values of 0 and 1 not only avoid the losses incurred by self-discretization but also effectively control the errors caused by discretized weights. The two crucial parameters of SNN, $V_{threshold}$ and $V_{reset}$, are generally set as integers, eliminating any discretization errors. In fact, the only aspect that requires attention is the discretization of weights. An estimate of the upper bound on the discretization error is given in the assertion below.
\begin{proposition} \label{prop1}
For the IF neuron model and LIF neuron model, the discretization error is independent of the scaling factor $\tau$ and only depends on the number of spikes.
\end{proposition}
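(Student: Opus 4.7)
The plan is to track the discretization error $e_H[t] := \hat{H}[t] - \tau H[t]$ and $e_V[t] := \hat{V}[t] - \tau V[t]$ across the computation, and bound the per-step increment using the fact that rounding errors are always at most $1/2$, regardless of how large $\tau$ is. Concretely, writing $\hat{w}_{ij} = \lfloor \tau w_{ij} \rceil = \tau w_{ij} + \epsilon_{ij}$ gives $|\epsilon_{ij}| \le 1/2$ uniformly in $\tau$, and the standard convention of taking $V_{threshold}$ and $V_{reset}$ to be integers makes $\hat V_{threshold} = \tau V_{threshold}$ and $\hat V_{reset} = \tau V_{reset}$ exact, so the only source of error is the weight rounding.

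For the IF model, $f(V,I) = I[t] = \sum_j w_{ij} S_j[t]$, so subtracting the discretized update from its $\tau$-scaled continuous counterpart gives $e_H[t] = e_V[t-1] + \sum_j \epsilon_{ij} S_j[t]$. Since $S_j[t] \in \{0,1\}$, the incremental error contributed at step $t$ is bounded in absolute value by $K_t/2$, where $K_t$ is the number of presynaptic neurons that fire at step $t$; this bound contains no $\tau$. A straightforward induction then yields $|e_H[T]| \le (1/2)\sum_{t}K_t$, i.e.\ the error is at most half the total number of spikes received since the last reset, and every fire event collapses $e_V$ back to $0$ (because $\hat V_{reset}$ is exact), so resets can only improve the bound.

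For the LIF model the recurrence is $H[t] = (1-1/\Omega)V[t-1] + I[t]/\Omega$, which carries an additional leakage coefficient. The same strategy applies after also discretizing that coefficient: the extra error it contributes to $e_H[t]$ is at most the rounding bound times $|V[t-1]|$, and the Reset rule confines $|V[t-1]| \le \max(|V_{reset}|,V_{threshold})$, a fixed constant of the model. Combining with the IF-style weight-error analysis yields a bound of the same shape, namely a multiple of the running spike count plus a constant depending only on fixed model parameters, with no $\tau$ dependence. The same computation also produces the explicit upper bound on $|\hat H[t]|$ needed later to size the plaintext modulus $p$.

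The step I expect to be the main obstacle is handling the discontinuity of \textbf{Fire}: a tiny absolute error in $\hat H[t] - \hat V_{threshold}$ can still flip a spike, and once the firing patterns of DiSNN and SNN diverge the clean identification $\hat V[t] = \tau V[t]$ after a Reset can fail by as much as $\tau V_{threshold}$, which would spoil the $\tau$-free bound. I would address this either by a margin assumption --- the continuous trajectory stays at distance at least $\mu > 0$ from threshold, so that choosing $\tau$ large enough to make $(1/2)\sum_t K_t < \tau \mu$ rules out mis-firing --- or by measuring the error as the scalar gap $|\hat H[t]-\tau H[t]|$ rather than as a perfect firing-pattern match, so that an occasional extra or missing spike is absorbed into the running $K_t$ count and the $\tau$-free conclusion survives.
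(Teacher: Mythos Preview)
Your core computation---write $\hat w_{ij}=\tau w_{ij}+\epsilon_{ij}$ with $|\epsilon_{ij}|\le 1/2$, so that $\hat I_i[t]-\tau I_i[t]=\sum_j\epsilon_{ij}S_j[t]$ is bounded by half the number of incoming spikes, independently of $\tau$---is exactly the argument the paper gives. The paper, however, is much terser: it simply invokes linearity of the charging map $f$ to say $\tau f(V,I)=f(\tau V,\tau I)$, concludes that ``the discretization error is only concentrated in $\hat I[t]$,'' and stops after the single-step bound $\max_i|\hat I_i[t]-\tau I_i[t]|\le\tfrac12\sum_jS_j[t]$. There is no induction over time, no separate treatment of the LIF leakage coefficient, and no discussion of the \textbf{Fire} discontinuity.

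So your plan is not wrong, but it is considerably more ambitious than what the proposition (as the paper interprets it) requires. The paper reads ``discretization error'' narrowly as the per-step error in the input current, not as a guarantee that the two networks produce identical spike trains; under that reading the threshold-crossing issue you flag as the main obstacle simply never arises. Your induction and your margin/absorption workaround are sound ideas for a stronger statement, but for this proposition the one-line linearity-plus-rounding bound already suffices.
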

\begin{proof}
For the IF and LIF neuron models, let a linear function $f$ denote their charging processes. We have,
$$
\begin{aligned}
\tau f(V[t-1], I[t]) = f(\tau V[t-1], \tau I[t]) 
= f(\hat{V}[t-1], \hat{I}[t]).
\end{aligned}
$$
This means that the discretization error is only concentrated in $\hat{I}[t]$, 
$$\begin{aligned}
\max\limits_i| \hat{I}_i[t] - \tau I_i[t] |&= \max\limits_i|\sum \limits_j (\tau w_{ij} - \hat{w}_{ij})S_{j}[t]| \\
&\leq \max \limits_{i,j}|\tau w_{ij} - \hat{w}_{ij}| \cdot |\sum \limits_j S_{j}[t]| \\
&\leq \frac{1}{2} \cdot |\sum \limits_j S_{j}[t]|.
\end{aligned}$$
As the above showing, discretization error is actually independent of $\tau,$  but proportional to the number of spikes.
\end{proof}

Proposition \ref{prop1} provides an upper bound on the overall discretization error, where $\frac{1}{2}$ represents the maximum value of individual weight discretization error. However, in practical situations, not all weights will reach the maximum error. From a mathematical expectation perspective, the discretization error can be further reduced. The proof is provided by the following Proposition.

\begin{proposition}
For the IF and LIF neuron models, assuming the weights follow a uniform distribution on $[-\frac{1}{2},\frac{1}{2}]$ and the number of spikes follows a Poisson distribution with intensity $\lambda$, the mathematical expectation of the discretization error is $\lambda/4$.
\end{proposition}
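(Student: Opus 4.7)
My plan is to follow the same error decomposition used in the proof of Proposition~\ref{prop1} and take expectations termwise, exploiting independence between the quantization errors and the spike pattern.

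First I would fix a neuron $i$ and set $e_{ij} \triangleq \tau w_{ij} - \hat{w}_{ij}$, the per-weight rounding error. Since $\hat{w}_{ij} = \lfloor \tau w_{ij}\rceil$, under the hypothesis that $w_{ij}$ is uniform on $[-\tfrac{1}{2},\tfrac{1}{2}]$ the quantity $e_{ij}$ lies in $[-\tfrac{1}{2},\tfrac{1}{2}]$, and it is itself uniform on that interval (this is either assumed directly, or immediate when $\tau=1$; for general integer $\tau$ it follows from the standard quantization-noise argument by splitting $[-\tfrac{1}{2},\tfrac{1}{2}]$ into $\tau$ equal sub-intervals). In particular
\begin{equation}
\mathbb{E}\bigl[|e_{ij}|\bigr] \;=\; \int_{-1/2}^{1/2} |x|\,dx \;=\; \tfrac{1}{4}.
\end{equation}

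Next I would reuse the bound from the previous proposition. From the identity
\begin{equation}
\hat{I}_i[t] - \tau I_i[t] \;=\; \sum_j e_{ij}\,S_j[t],
\end{equation}
the triangle inequality gives the discretization error
\begin{equation}
\bigl|\hat{I}_i[t] - \tau I_i[t]\bigr| \;\le\; \sum_j |e_{ij}|\,S_j[t].
\end{equation}
Taking expectations and invoking independence of the quantization errors $\{e_{ij}\}$ from the spike variables $\{S_j[t]\}$, together with $S_j[t]\in\{0,1\}$, yields
\begin{equation}
\mathbb{E}\Bigl[\sum_j |e_{ij}|\,S_j[t]\Bigr] \;=\; \mathbb{E}\bigl[|e_{ij}|\bigr]\cdot\mathbb{E}\Bigl[\sum_j S_j[t]\Bigr].
\end{equation}
Finally, since $\sum_j S_j[t]$ is Poisson with intensity $\lambda$ we have $\mathbb{E}[\sum_j S_j[t]] = \lambda$, and combining with $\mathbb{E}[|e_{ij}|]=\tfrac{1}{4}$ produces the claimed value $\lambda/4$.

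The main subtlety, and what I expect to need the most care, is the modeling step: pinning down precisely what ``discretization error'' refers to in this setting (I take it to be the triangle-inequality bound that was the operative quantity in Proposition~\ref{prop1}) and justifying that rounding errors can be treated as uniform and independent of the spike pattern. Once those modeling choices are fixed, the computation itself is an elementary linearity-of-expectation argument and poses no technical difficulty.
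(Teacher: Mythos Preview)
Your proposal is correct and follows essentially the same approach as the paper: both identify the relevant quantity as $\sum_j |e_{ij}|\,S_j[t]$ (the triangle-inequality bound from Proposition~\ref{prop1}), use that $\mathbb{E}|e_{ij}|=\tfrac14$ for uniform rounding error, and combine this with $\mathbb{E}[\sum_j S_j[t]]=\lambda$ via independence. The only cosmetic difference is that the paper conditions on the spike count $N=\sum_j S_j[t]$ and sums over $\mathbb{P}(N=n)$ (a Wald-type computation), whereas you invoke linearity and independence directly; your version is slightly cleaner but the substance is identical.
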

\begin{proof}
Denote the random variable $\tau w_{ij} - \hat{w}_{ij}$ as $\xi_j(\omega)$, we can see that it follows a uniform distribution on the interval $[-\frac{1}{2}, \frac{1}{2}]$. Set $N(\omega) = \sum_j S_j[t]$ which is a Poisson random variable with intensity $\lambda$. Note that $\mathbb{E}(|\xi_i|) = \frac{1}{4}$, $\mathbb{P}(N=n) = e^\lambda \cdot \frac{\lambda ^n}{n!}$ and $\sum\limits_{n = 0}^\infty \mathbb{P}(N=n) = 1$. Then, the expectation of the error can be written as follows:
\begin{equation*}
	\begin{aligned}
 \mathbb{E}| \hat{I}[t] - \tau I[t]| &\approx \mathbb{E}(\sum \limits_{j=0}^{N(\omega)}|\xi_j|) \\
&= \sum_{n=0}^\infty \mathbb{E}(\sum_{i=0}^n |\xi_i| \mid N(\omega) = n) \cdot \mathbb{P}(N=n) \\
&= \sum_{n=0}^\infty \frac{n}{4} \cdot e^\lambda \cdot \frac{\lambda ^n}{n !} = \frac{\lambda}{4}.
	\end{aligned}
\end{equation*}
\end{proof}

Notice that in the Proof, $\mathbb{E}( |\xi_i| \mid N(\omega) = n) = \mathbb{E}( |\xi_i|)$ is from the independence between $\xi_i(\omega)$ and $N(\omega)$. 

We can obtain a similar conclusion as Proposition \ref{prop1}: the number of spikes, not the parameter $\tau$, affects the magnitude of the error. Although the Proposition above indicates that the size of $\tau$ does not affect the growth of the error, we cannot infinitely increase $\tau$ in order to improve accuracy. This is because larger $\tau$ implies a larger message space, which puts more computational burden on homomorphic encryption. In Proposition \ref{maxmin}, we specify the relationship between them.

\begin{proposition}\label{maxmin}
For the IF and LIF models, the maximum and minimum values generated during the computation process are controlled by $\tau$, the number of spikes, and the extremal values of the weights.
\end{proposition}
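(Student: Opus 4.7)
The plan is to propagate bounds through the three stages of the discretized neuron update — input accumulation, charging, and reset — and then simply collect terms. Throughout, let $W_{\max}=\max_{i,j}|w_{ij}|$ and let $N_s[t]=\sum_j S_j[t]$ denote the number of pre-synaptic spikes arriving at step $t$.

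First I would bound the discretized input current $\hat{I}[t]$. Since $|\hat{w}_{ij}|=|\lfloor \tau w_{ij}\rceil|\leq \tau W_{\max}+\tfrac{1}{2}$ and $S_j[t]\in\{0,1\}$, the triangle inequality immediately gives
\begin{equation*}
|\hat{I}[t]|\;\leq\;\bigl(\tau W_{\max}+\tfrac{1}{2}\bigr)\cdot N_s[t],
\end{equation*}
which already exhibits the dependence on $\tau$, the spike count, and the extremal weights claimed in the statement.

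Next I would bound the post-reset potential $\hat{V}[t]$. The discretized reset rule clamps $\hat{V}[t]$ into $\{\hat{V}_{reset}\}\cup[\hat{V}_{reset},\hat{V}_{threshold})$, so $|\hat{V}[t]|\leq M:=\max(|\hat{V}_{reset}|,|\hat{V}_{threshold}|)$ at every step, independently of $t$ — this is the key observation that prevents the extrema from accumulating across time. For the IF model, where $f(V,I)=I$ and hence $\hat{H}[t]=\hat{V}[t-1]+\hat{I}[t]$, combining the two bounds yields
\begin{equation*}
|\hat{H}[t]|\;\leq\;M+\bigl(\tau W_{\max}+\tfrac{1}{2}\bigr)\cdot N_s[t],
\end{equation*}
so all quantities generated during the computation are controlled as advertised. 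For the LIF model the update $f(V,I)=(-V+I)/\Omega$ contributes an additional leakage that merely scales $\hat{V}[t-1]$ by a factor of magnitude at most one, so the same bound structure persists up to a constant depending only on $\Omega$ and $M$.

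The main obstacle I anticipate is the LIF case rather than IF. The scale-commutativity $\tau f(V,I)=f(\tau V,\tau I)$ that was used implicitly in the proof of Proposition \ref{prop1} requires the leak coefficient to be exactly representable on the integer lattice; otherwise, rounding after each leak step would spoil the clean post-reset bound and force a separate per-step error accumulation argument. Assuming the leak is absorbed into the weight discretization so that the discretized charging step remains exactly linear in $\hat{V}[t-1]$ and $\hat{I}[t]$, the two displays above combine to give the claimed control of the extrema purely in terms of $\tau$, $W_{\max}$, $N_s[t]$, and the reset/threshold parameters.
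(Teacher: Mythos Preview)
Your argument is essentially the paper's own: bound $\hat I[t]$ by (a constant times) $\tau W_{\max}$ times the spike count, observe that the \textbf{Reset} rule confines $\hat V[t]$ to $[\hat V_{reset},\hat V_{threshold})$, and combine to bound $\hat H[t]$. The only minor differences are that the paper drops your $+\tfrac12$ rounding term and records separate upper and lower bounds $\alpha,\beta$ on $\hat H[t]$ rather than a single absolute-value bound (the asymmetric form is what feeds into the subsequent Corollary); your extra caution about the LIF leak coefficient is not addressed in the paper, which simply asserts the LIF case is similar.
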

\begin{proof}
From equation \ref{HSVt}, for the IF model case, it can be observed that the range of membrane potential $V[t]$ is controlled by the \textbf{Reset} function of the neuron model, bounded within $[V_{reset}, V_{threshold}]$. The maximum and minimum values can only occur in the variable $\hat{H}$. Therefore, the extreme values satisfy the following inequalities:
$$
\begin{aligned}
Max &\triangleq \max(\hat{H}[t]) =\max (\hat{V}[t] + \hat{I}[t]) \\
&\leq \tau (V_{threshold}+ \max\limits_{i,j}(|w_{ij}| \cdot \sum\limits_j |S_j[t]|), \\
Min &\triangleq \min(\hat{H}[t]) \ge - |\hat{V}_{reset}| - |\hat{I}[t]| \\
&\geq -\tau(V_{reset} + \max\limits_{i,j}(|w_{ij}|)\cdot \sum\limits_j |S_j[t]|). \\
\end{aligned}
$$
Besides, we can also prove LIF model in a similar way.
We denote the upper and lower bound as $\alpha, \beta$, respectively.
\end{proof}

\begin{corollary} In general, when the neuron model has $V_{reset} = 0$, the relation between the maximum and minimum values is given by:
\begin{equation}\label{maxminEq}
\beta = -\alpha + \hat{V}_{threshold},
\end{equation}
where $\alpha,\beta$ represent the upper and lower bound of DiSNN from Proposition \ref{maxmin}
\end{corollary}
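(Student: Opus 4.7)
The plan is to specialize the two bounds derived in Proposition \ref{maxmin} to the case $V_{reset}=0$ and verify the claimed affine relation by direct substitution. Concretely, I would first rewrite the bounds from that proposition in the compact form $\alpha = \tau V_{threshold}+\tau M\cdot N$ and $\beta = -\tau V_{reset}-\tau M\cdot N$, where $M = \max_{i,j}|w_{ij}|$ and $N=\sum_j|S_j[t]|$, so that both extrema share a common ``spread'' term $\tau M\cdot N$ and differ only through the shifts by $V_{threshold}$ and $V_{reset}$.

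Next I would impose the hypothesis $V_{reset}=0$, which collapses $\beta$ to $-\tau M\cdot N$ while leaving $\alpha=\tau V_{threshold}+\tau M\cdot N$ unchanged. Adding $\hat{V}_{threshold}=\tau V_{threshold}$ to $-\alpha$ cancels the threshold contribution and recovers exactly $-\tau M\cdot N=\beta$, which is the identity $\beta=-\alpha+\hat{V}_{threshold}$ asserted by the corollary. Since the derivations in Proposition \ref{maxmin} for the IF and LIF models produce bounds of the same symbolic shape (up to the sign of the $V_{reset}$ contribution), this single computation covers both neuron models at once, so no separate case analysis is needed.

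There is essentially no obstacle: once one accepts the explicit forms of $\alpha$ and $\beta$ from the preceding proposition, the corollary is a one-line cancellation. The only delicate point worth spelling out is the bookkeeping of the discretization: one should check that the scaling factor $\tau$ appearing inside $\alpha$ and $\beta$ is the same one used in $\hat{V}_{threshold}=\lfloor \tau\cdot V_{threshold}\rceil$, so that the cancellation is exact rather than approximate. Since the paper has already remarked that $V_{threshold}$ and $V_{reset}$ are taken to be integers, one has $\hat{V}_{threshold}=\tau V_{threshold}$ on the nose, eliminating any rounding artifact and rendering the equality in the corollary exact.
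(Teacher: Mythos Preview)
Your proposal is correct and matches the paper's intent: the paper does not spell out a separate proof for this corollary, treating it as immediate from the explicit bounds in Proposition~\ref{maxmin}. Your direct substitution of $\alpha=\tau(V_{threshold}+M\cdot N)$ and $\beta=-\tau(V_{reset}+M\cdot N)$ followed by setting $V_{reset}=0$ is exactly the one-line cancellation that the authors have in mind, and your remark about $\hat V_{threshold}=\tau V_{threshold}$ being exact (since $V_{threshold}$ is an integer) is a nice piece of hygiene the paper leaves implicit.
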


This seemingly trivial but highly important conclusion ensures the correctness of homomorphic evaluation. We will encounter it in subsequent sections.

\begin{figure*}[htb]
\centering
\subfigure[Finite Filde used in FHE]{
 \label{ZP}
\includegraphics[width=0.36\textwidth]{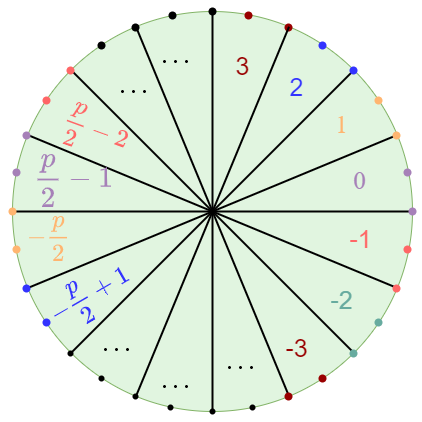}
}
\subfigure[The Prediction Procedure of SNN] {
\label{main_process_figure}
 \includegraphics[width=0.55\textwidth]{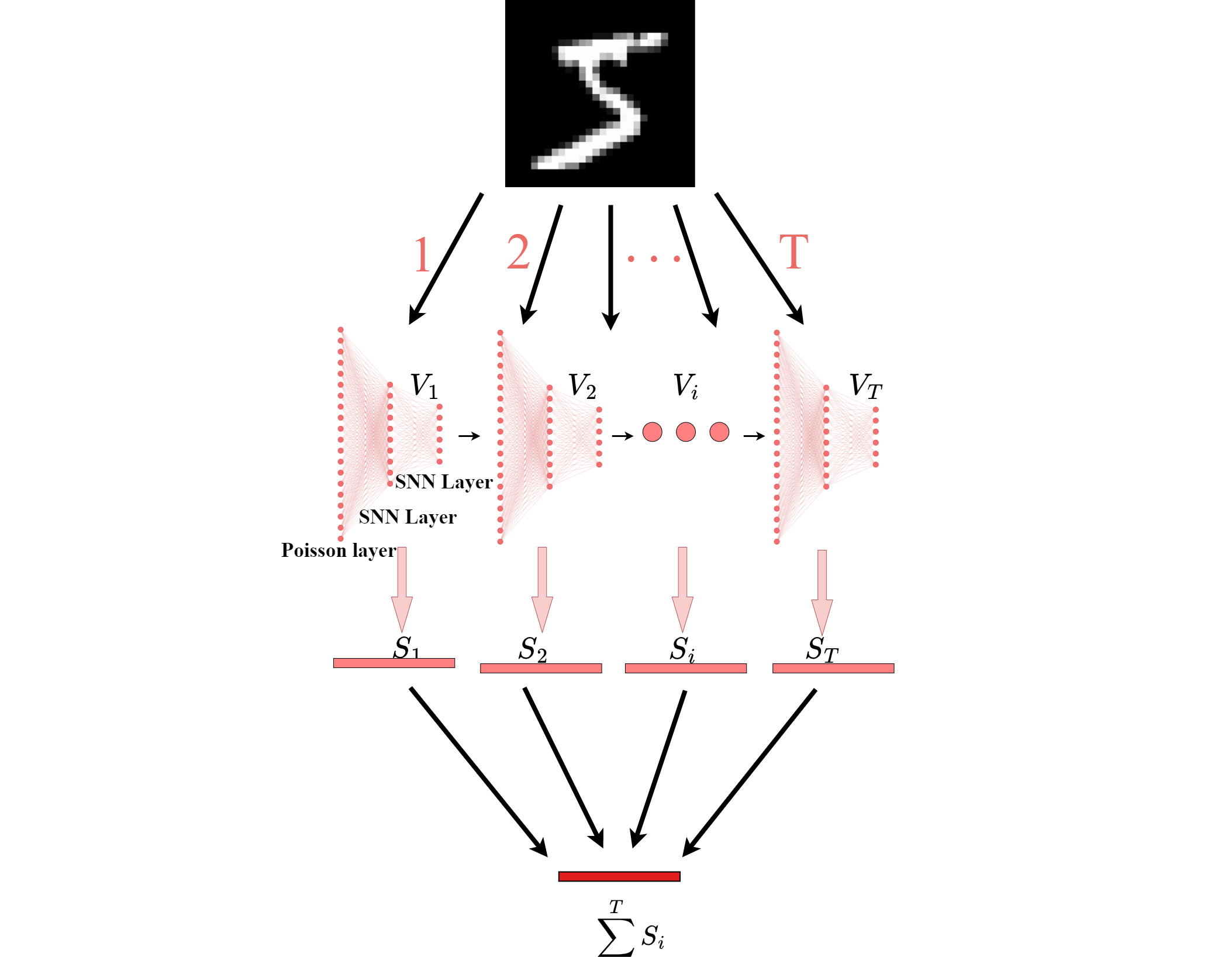}
}
\caption{(a)The circle represents the message space $Z_p$, while the numbers on the circumference represent the finite field of the ciphertext space conventionally denoted as $Z_q$. This mapping relationship between $Z_p$ and $Z_q$ is reflected in the partitioning of the circle. Operations on both plaintext and ciphertext follow the rules of finite fields, where values exceeding the bounds undergo modular arithmetic, wrapping around to the opposite end. (b)The output of an SNN corresponds to the firing frequency within a specific time window of the output layer, where the magnitude of the firing rate reflects the response strength towards a particular category. Thus, the network is required to operate for a designated duration, utilizing the average firing rate after $T$ time steps as the classification score. During each time step, the image sequentially traverses the Poisson layer, SNN hidden layer, and SNN output layer.}
\end{figure*}

Figure \ref{ZP} illustrates a finite field $Z_p$, which forms a cyclic group. Similar to a circular arrangement, if a number exceeds $p/2$, it wraps around to the opposite end, as well as values below $-\frac{p}{2}+1$. Being defined on $Z_p$, the intermediate values during DiSNN computation must be accommodated in $Z_p$, or else exceeding the boundaries will result in computational errors. This means that the inequality 
$
-\frac{p}{2}+1 \le \beta \textless \alpha \textless \frac{p}{2}
$
must be satisfied. However, large $p$ leads to a decrease in computational efficiency for homomorphic encryption. Therefore, selecting an appropriate $\tau$ that strikes a balance between computational efficiency and discretization accuracy becomes a crucial consideration.

\section{Homomorphic Evaluation of DiSNN}

\vspace{-1mm}This section will delve into a detailed analysis of how DiSNN performs prediction on homomorphically encrypted images. As the prediction procedure of SNN shown in Figure \ref{main_process_figure}, all operations performed on ciphertext can be summarized as \textbf{Fire}, \textbf{Reset}, and Multisum(scalar multiplication and summation). Poisson encoding merely involves comparing magnitudes, which can be computed using the \textbf{Fire} function. Multisum is naturally supported in FHE, so the challenges lie in performing \textbf{Reset} and \textbf{Fire} functions on ciphertext since they are non-polynomial functions. We leverage the programmable bootstrapping technique introduced by Chillotti et al.\cite{chillottiImprovedProgrammableBootstrapping2021} to execute the \textbf{Fire} and \textbf{Reset} functions of the SNN model while simultaneously refreshing the noise of the ciphertext.

\subsection{Homomorphic Computation of Multisum}
We select a neuron from the SNN layer, and its input is expressed as Equation \ref{couhang}, which is correct, as long as the noise carried by the ciphertext does not exceed the upper bound of the noise shown in the following Remark.

\begin{equation}
\label{couhang}
\begin{aligned}
\sum_j \hat{w}_{ij}LWE(S_j [t]) &= \sum_j LWE(\hat{w}_{ij} \cdot S_j [t]) \\
&= LWE(\sum_j \hat{w}_{ij} S_j [t]) \\
&= LWE(\hat{I}_{i} [t]).
\end{aligned} 
\end{equation}

\begin{remark}\label{crectmultisum}
It is observed that the multiplication and addition operations on ciphertexts will amplify the noise carried by the ciphertexts. To ensure the correctness of the above computation which equals to 
$$Dec(\sum_j \hat{w}_{ij}LWE(S_j [t])) = Dec(LWE(\hat{I}_{i} [t])).$$
There are two conditions that need to be satisfied: (1) $\sum \limits_j \hat{w}_{ij}S_j[t] \in [-\frac{p}{2},\frac{p}{2})$; (2) the noise does not grow beyond the noise bound. The first condition is easy to satisfy by choosing a sufficiently large message space $Z_p$. To address the noise issue, let us assume that $LWE(S_j[t])$ has an initial noise $\sigma$ (as each spike is generated via bootstrapping). After the multiplication and addition operations, the noise in the ciphertext grows to $|\sum\limits_j \hat{w}_{ij}|\cdot \sigma$, which is proportional to the discretization parameter $\tau$. One way to control the noise growth is to decrease $\tau$, which may lead to a decrease in accuracy. Another approach is to trade off the security level by reducing the initial noise $\sigma$, where increasing the dimension of the LWE problem $n$ could remedy the situation\cite{albrechtConcreteHardnessLearning2015}.
\end{remark}

\subsection{Homomorphic Computation of Fire Function}
The \textbf{Fire} function is a non-polynomial function, so we must rely on the Theorm \ref{bootthm} to evaluate it and refresh the ciphertext noise simultaneously. We propose a solution to implement the \textbf{Fire} function on ciphertexts, referred to as the \textbf{FHE-Fire} function, which can be realized as:
\begin{equation}
\begin{aligned}
\textbf{FHE-Fire}(LWE(m)) &\triangleq bootstrap(LWE(m))+1 \\
&= \begin{cases}
LWE(2), & \text{if } m \in [0,\frac{p}{2}), \\
LWE(0), & \text{if } m \in [-\frac{p}{2},0)
\end{cases} \\
&= LWE(2 \cdot Spike).
\end{aligned}
\end{equation}

by defining the program function $g$ of bootstrapping as:
\begin{equation}
g(m) \triangleq \begin{cases}
1, & \text{if } m \in [0,\frac{p}{2}), \\
-1, & \text{if } m \in [-\frac{p}{2},0).
\end{cases}    
\end{equation}

In this case, the spike signal is mapped to $\{0,2\}$, doubling its original value. This adds a slight complication for the subsequent fully connected layer's computation. However, it can be easily overcome. Since the spike signal is now doubled, we ensure the consistency of the final result by halving the weights as the following equation show: 
\begin{equation}
\begin{aligned}
LWE(\hat{I}) &= \sum_j \hat{w}_{ij} \cdot LWE(S_j)  \approx \sum_j \lfloor\frac{\hat{w}_{ij}}{2} \rfloor \cdot LWE(2 \cdot S_j) \\
&\approx \sum_j \text{Discretized}(w_{ij},\frac{\tau}{2}) \cdot LWE(2 \cdot S_j).
\end{aligned}    
\end{equation}

This approach also benefits the control of ciphertext noise by halving the discretization parameter $\tau$. Since ciphertext $LWE(2 \cdot S_j)$ obtained through bootstrapping has very low initial noise $\sigma$, our method reduces the noise by half, which can be easily proven using Remark \ref{crectmultisum}. This allows us to have lower noise growth and enables us to make more confident parameter choices.

\vspace{-0.5cm}
\subsection{Homomorphic Computation of Reset Function}
The \textbf{Reset} function describes two characteristics of the neuron model's membrane potential. First, when the membrane potential $V$ exceeds $V_{threshold}$, a spike is emitted and the membrane potential is back to $V_{reset}$. Second, the membrane potential cannot be lower than $V_{reset}$, so if such a value is generated during the computation process, it needs also to be set to $V_{reset}$.

For convenience, the $V_{reset}$ is often set to 0. For non-zero conditions, we can shift the $V_{reset}$ to 0 using a translation method.
Similarly to the \textbf{Fire} function, here we set the program function $g$ of bootstrapping as:


\begin{equation}
g(m) \triangleq
    \begin{cases}
        0, & \quad \text{if } m \in [\hat{V}_{threshold},\displaystyle\frac{p}{2}), \\
m, & \quad \text{if } m \in [0,\hat{V}_{threshold}), \\
0, & \quad \text{if } m \in [\hat{V}_{threshold}-\displaystyle\frac{p}{2},0), \\
-(m+\displaystyle\frac{p}{2}), & \quad \text{if } m \in [-\displaystyle\frac{p}{2},\hat{V}_{threshold}-\frac{p}{2}), 
    \end{cases}
\end{equation}

where $g(x) = -g(x+\frac{p}{2})$ must be satisfied. Then, the \textbf{FHE-Reset} function can be computed as follows:
\begin{equation}\label{FHE-Reset}
\begin{aligned}
\textbf{FHE-Reset}(LWE(m)) &\triangleq bootstrap(LWE(m) \\
&= \left\{ \begin{aligned} 
LWE(0)\ &,\ m \in [\hat{V}_{threshold},\frac{p}{2}), \\
LWE(m)\ &,\ m \in [0,\hat{V}_{threshold}), \\
LWE(0)\ &,\ m \in [\hat{V}_{threshold}-\frac{p}{2},0), \\
LWE(-(m+\frac{p}{2}))\ &,\ m \in [-\frac{p}{2},\hat{V}_{threshold}-\frac{p}{2}). \\
\end{aligned}
\right. 
\end{aligned}
\end{equation}

Note that $g$ does not match \textbf{Reset} function on interval $[-\frac{p}{2},\hat{V}_{threshold}-\frac{p}{2})$, which will lead to computation error. Therefore, the computation must be evaluated within the interval $[\hat{V}_{threshold}-\frac{p}{2},\frac{p}{2})$, equaling to $\hat{V}_{threshold} - \frac{p}{2} \leq \beta  \leq \alpha \le \frac{p}{2}$ from Proposition \ref{maxmin}. The given conditions can be simplified to $\alpha \textless \frac{p}{2}$, based on the insights provided by Equation \ref{maxminEq}. This condition serves as a necessary requirement for selecting the message space, as it ensures that the intermediate variable naturally falls within the correct range without the need for additional conditions.

Moreover, \textbf{FHE-Reset} function not only realizes computing the \textbf{Reset} function on the ciphertext but also serves the purpose of refreshing the noise that is accumulated during the computation of $LWE(\hat{V}[t])$. This means that we don't need to worry about the noise issue, and it can support an arbitrary number of computations.

\section{Experiment}
In this chapter, we will actually build a practical FHE-DiSNN network and demonstrate its advantages and limitations in experiments by comparing it with the original SNN. There are three parts in this section. first, we determine the simulation model and network depth of our network to train a convergent SNN. Second, we proceed to convert the well-trained SNN into DiSNN. Finally, in the third part, we conduct experiments to evaluate the accuracy and efficiency of FHE-DiSNN in performing forward propagation on encrypted images. This assessment provides insights into the performance of FHE-DiSNN in a secure and encrypted environment.

\subsection{Building an SNN in the clear}
We select a 784 ($28 \times 28$)-dimensional Poisson encoding layer as the input layer, and 30-dimensional and 10-dimensional IF model as the hidden layer and output layer, respectively. We utilize the Spikingjelly\cite{SpikingJelly} library, a PyTorch-based framework for SNN. Training SNN is an intriguing research direction, and in this study, the gradient surrogate method is chosen to train SNNs. Other commonly used training methods such as ANN-to-SNN conversion and unsupervised training with STDP are not extensively discussed here. In general, the network needs to run for a period of time, taking the average firing rate over $T$ time steps as the basis for classification.

The prediction process is essentially a forward propagation of the trained model, with the key difference that gradient approximation is not required. The predictive accuracy improves as $T$ increases, but the marginal effect leads to a diminishing rate of improvement in accuracy, while the time consumption continues to grow linearly. Hence, in order to maintain accuracy, it is vital to minimize the value of $T$ as much as possible. We are aware that in encrypted computations, bootstrapping is the most time-consuming operation. Here, we provide an estimate for the number of bootstrapping operations in FHE-DiSNN.

\begin{proposition}\label{timeconsume}
For a single-layer FHE-DiSNN network with n-dimensional input, k-dimensional hidden layer, m-dimensional output, and simulation time T, the number of required bootstrapping operations is $(n+2k+2m)T$.
\end{proposition}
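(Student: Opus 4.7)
The plan is to count bootstrapping operations per time step and multiply by $T$. Recall from Section~4 that Multisum (scalar multiplication and summation) is natively supported in FHE and requires no bootstrapping, so the only contributions come from \textbf{FHE-Fire} and \textbf{FHE-Reset}. I will walk through the prediction pipeline (Poisson encoder $\rightarrow$ hidden layer $\rightarrow$ output layer) shown in Figure~\ref{main_process_figure} and tally the non-polynomial evaluations at each stage.

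First, for the Poisson encoding layer: each of the $n$ input pixels is encoded into a spike by a magnitude comparison, which is realized as a single \textbf{FHE-Fire} call. That yields $n$ bootstraps per time step. Second, for the $k$-dimensional hidden layer, each neuron computes its membrane current $\hat{I}_i[t]$ by a Multisum (no bootstrap), then produces a spike via \textbf{FHE-Fire} and updates its membrane potential via \textbf{FHE-Reset}. Each neuron therefore costs $2$ bootstraps, for a total of $2k$ per time step. Third, by exactly the same reasoning, the $m$ output neurons contribute $2m$ bootstraps per time step. Adding these gives $n + 2k + 2m$ bootstraps per time step, and multiplying by the $T$ simulation steps gives the claimed total $(n + 2k + 2m)T$.

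The only subtle point---and the one I would be careful to justify explicitly---is that \textbf{FHE-Reset} really must be applied at every neuron at every time step, not only when a spike fires. There are two reasons: (i) functionally, the piecewise definition of \textbf{Reset} in equation~\eqref{FHE-Reset} must be evaluated to enforce the bound $\hat{V}_{reset} \le \hat{V}[t] \le \hat{V}_{threshold}$ regardless of the outcome, and (ii) operationally, the Multisum amplifies ciphertext noise by the factor discussed in Remark~\ref{crectmultisum}, so the bootstrapping embedded in \textbf{FHE-Reset} is needed to refresh the noise before the next time step; otherwise correctness over arbitrary $T$ would fail. Once this is argued, the counting is immediate.
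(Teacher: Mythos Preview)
Your proposal is correct and follows essentially the same counting argument as the paper: one \textbf{FHE-Fire} per input pixel for Poisson encoding, plus one \textbf{FHE-Fire} and one \textbf{FHE-Reset} per hidden and output neuron, all multiplied by $T$. Your added justification that \textbf{FHE-Reset} must be invoked at every neuron in every time step (for both functional and noise-refresh reasons) is a nice elaboration beyond what the paper's proof spells out, but the structure of the argument is identical.
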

\begin{proof}
In FHE-DiSNN, the Poisson encoding requires one \textbf{FHE-Fire} function call, and the discharge and reset processes of the model each require one \textbf{FHE-Fire} function and one \textbf{FHE-Reset} function call. In a single simulation time, there are $n$ Poisson encoding operations and $m+k$ \textbf{FHE-Fire} and \textbf{FHE-Reset} operations, resulting in a total of $(n+2m+2k)$ bootstrapping. With $T$ repeated simulation steps, the total number of bootstrapping is given by:
$$
\text{nums} = (n+2k+2m) \cdot T.
$$
\end{proof}

We can reduce the number of bootstrapping from two aspects to improve experimental efficiency. First, we can encrypt the message after Poisson encoding, which eliminates the requirement for $nT$ times bootstrapping.
Second, we can shorten the simulation time $T$. The curve in Figure \ref{Taccuracy} shows that the optimal trade-off is achieved at $T=10$, where the accuracy is comparable to the highest level achieved at $T=20$.

\vspace{-0.6cm}
\begin{figure*}[htb]
\centering
\subfigure [$T$-accuracy in SNN.]{
 \label{Taccuracy}
\includegraphics[width=0.475\columnwidth,height=0.2\textwidth]{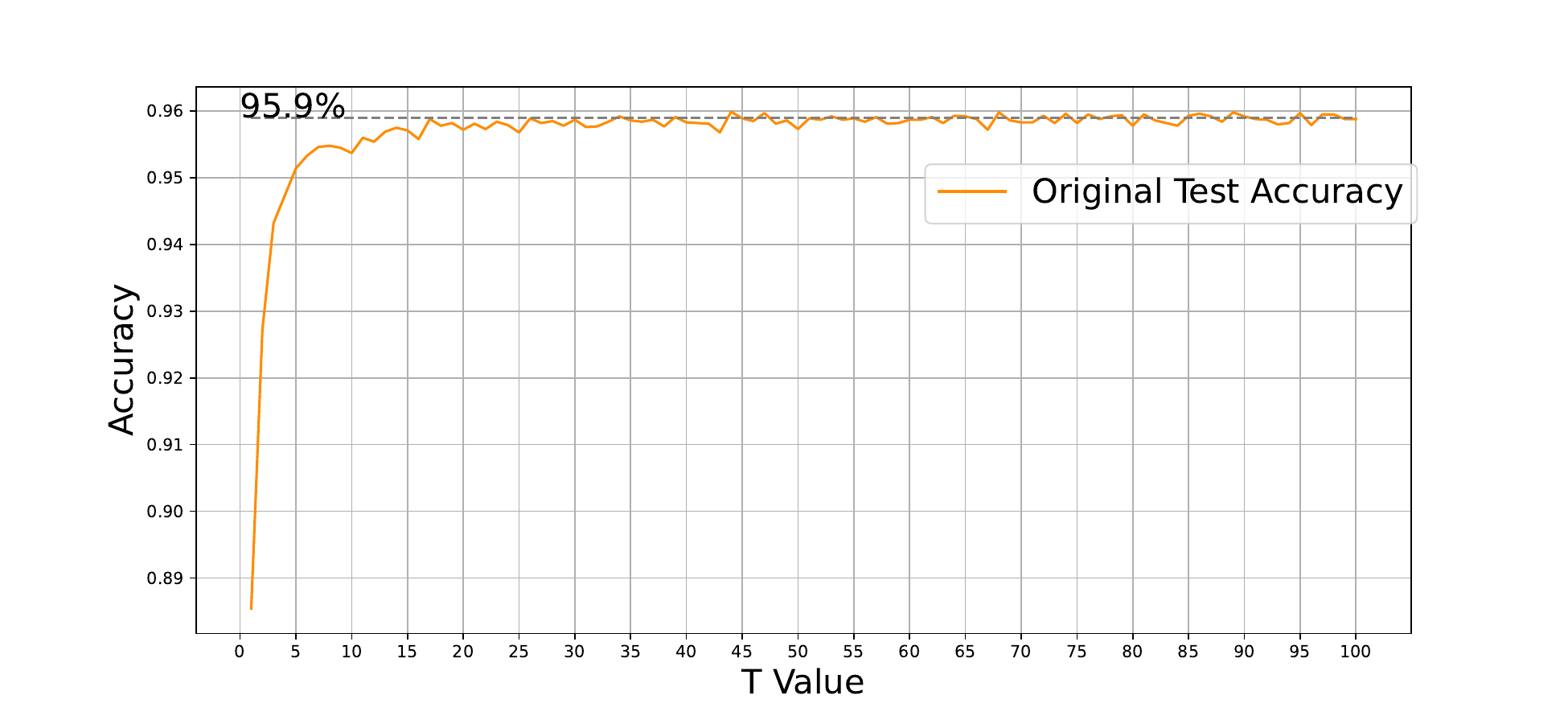}
}
\subfigure[$\tau$-accuracy in DiSNN] {
\label{tauacc}
\includegraphics[width=0.475\columnwidth,height=0.2\textwidth]{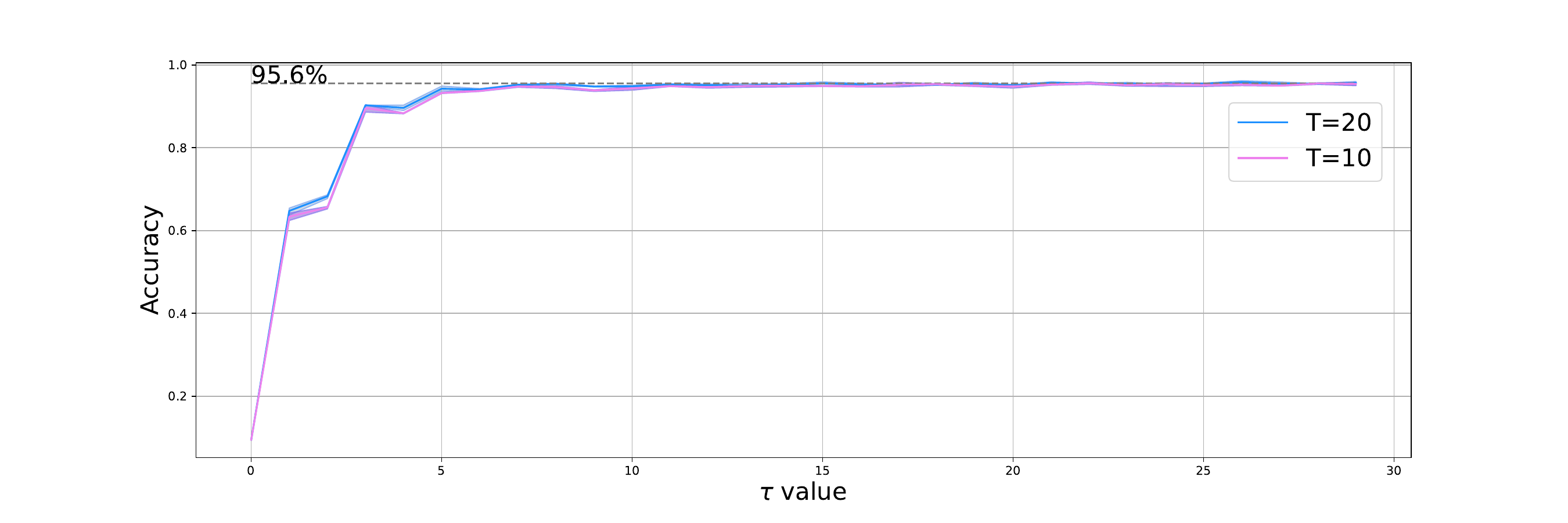}
}
\caption{(a)The curve (orange) illustrates the influence of simulation time $T$ on the prediction accuracy of the original SNN. By $T=10$, the SNN has reached an accuracy of 95.3\%, and at $T=20$, it has essentially achieved its maximum level of accuracy. (b)The graph illustrates the correlation between $\tau$ and the prediction accuracy in DiSNN. Note that the light blue and lavender shading in the figure represents the maximum and minimum fluctuation intervals of the results of the five independent experiments under the conditions of $T=20$ and $T=10$, respectively, while the blue and purple lines represent the average values of the experimental results.}
\label{figure}
\end{figure*}

\vspace{-0.5cm}
\subsection{Constructing an FHE-DiSNN from SNN }

The accuracy of DiSNN improves with the increase of $\tau$, but the marginal effect is also present. Moreover, the increase in $\tau$ leads to linear growth of noise, resulting in elevated computational costs. Therefore, selecting an appropriate value for $\tau$ is crucial. Following the design of the FHE-DiSNN algorithm, we conduct experiments to show the relationship between $\tau$ and the prediction accuracy of DiSNN for $T=10$ and $T=20$. The curve is plotted in Figure \ref{tauacc}, where the result indicates that $\tau=10$ and $\tau=10$ achieve the optimal trade-off and the highest accuracy, respectively.

The next step is to choose an appropriate $p$ of the message space. According to Equation \ref{maxminEq}, the requirements can be simplified to$\alpha \textless \frac{p}{2}$. The maximum value depends on $\max\limits_{i,j}(\sum\limits_jw_{ij}S_j[t])$, which is a fixed value for a well-trained network and can be pre-determined. Through several experiments, we have $\alpha \approx 50\tau$ for the first layer and approximately $10\tau$ for the second layer. A message space size of $p=1024$ is enough to accommodate the DiSNN with $\tau=10$, and $p=2048$ for $\tau=20$. For our experiments, we have selected the STD128 parameter set\cite{micciancioBootstrappingFHEWlikeCryptosystems2020} shown as follows:\\
-Size of message space: $p =1024$ or $2048$\\
-Dimension of LWE ciphertext: $n=512$\\
-Degree of the polynomials in the ring: $N=1024$\\
-Bits of the decomposition basis of KeySwitching: $B_{ks} = 14$\\
-Bits of the decomposition basis for TGSW ciphertext: $B_g = 7$\\

\vspace{-0.5cm}
\subsection{Exhibiting experiment result}
We conduct the following experimental process on an Intel Core i7-7700HQ CPU @ 2.80 GHz:\\
\textbf{1.} The grayscale handwritten digit image is encoded by the Poisson encoding layer.\\
\textbf{2.} The Poisson-encoded image is then encrypted into LWE ciphertexts.\\
\textbf{3.} The ciphertext is multiplied by the discretized plaintext weights and passed into the SNN layer.\\
\textbf{4.} The IF neuron model in the SNN layer calculates the charging, firing, and reset procedure on the ciphertext. The bootstrapping operations involved in this process are accelerated using FFT.\\
\textbf{5.} Steps \textbf{1-4} repeat $T$ times, and the resulting outputs are accumulated as classification scores.\\
\textbf{6.} Decrypt and the highest score are selected as the classification result.\\
In Table \ref{result}, we show the experimental results, with the \textbf{Fire} and \textbf{Reset} functions in Step 4 implemented according to our method in Section 4.


\begin{table}[htbp]
\centering
\vspace{-0.5cm}
\begin{tabular}{|c|c|c|c|c|c|}

\hline$(\mathrm{T}, \tau)$ & FHE-DiSNN & DiSNN &  SNN & Time per step & Time per image \\
\hline$(10,10)$ & $94.40 \%$ & $95.00 \%$ & $95.30 \%$ & 0.83s & 8.31s\\
$(10,20)$ & $94.40 \%$ & $95.00 \%$ & $95.30 \%$  & 0.86s & 8.67s\\
$(20,10)$ & $94.80 \%$ & $95.10 \%$ & $95.70 \%$ & 0.81s & 16.21s\\
$(20,20)$ & $95.10 \%$ & $95.30 \%$ & $95.70 \%$ & 0.79s & 15.97s\\
\hline\hline
$ $ & FHE-DiNN & DiNN & NN & Time per step & Time per image\\
\hline
30 neurons & $93.46\%$ & $93.55 \%$ & $94.46 \%$ & 0.49s & 0.49s\\

\hline
\end{tabular} 
\vspace{0.15cm}
\caption{The experiment result. The table above presents the results of experiments conducted with four different parameter sets. The first three columns represent the prediction accuracy of FHE-DiSNN, DiSNN, and SNN, respectively. In the case of the original SNN, only the parameter $T$ influences the prediction performance. The fourth and fifth columns display the time consumed by FHE-DiSNN during a single time step and a complete prediction procedure, which, as stated by Proposition \ref{timeconsume}, is directly proportional to the simulation time $T$. The last two lines excerpt the experimental results of FHE-DiNN\cite{bourseFastHomomorphicEvaluation2018} for comparison, and SNN and NN have the same structure. }
\label{result}
\end{table}

The results reveal, as emphasized at the beginning of the article, that discretization has minimal impact on SNN. The original SNN, with 30 hidden layers, achieves a stable accuracy rate of around 95.7\%, outperforming many second-generation neural networks. DiSNN also demonstrates a commendable accuracy rate of 95.3\%(the best parameters). Importantly, it only incurs a loss of 0.4\% compared to the original SNN, showcasing the inherent advantages of DiSNN. Furthermore, FHE-DiSNN performs impressively, consistently surpassing 94\% accuracy across four parameter sets. Particularly, the (20,20) parameter set demonstrates comparable performance to DiSNN. However, FHE-DiSNN suffers from time inefficiency due to the increased number of bootstrapping operations caused by the simulation time $T$, resulting in each prediction taking 8(16) seconds with 0.8 seconds consuming on average for a single simulation step.

During the experimental process, we observe that the number of spike firings differs between the FHE-DiSNN and DiSNN during computation. This suggests that certain ciphertexts may encounter noise overflow. However, this has a minimal effect on the final categorization outcomes. This is because slight noise overflow only causes a deviation of $\pm 1$, and abnormal spike firings occur only when the value is at the edge of $\hat{V}_{\text{threshold}}$, with a small probability. Additionally, individual instances of abnormal spike firings are effectively mitigated within the $T$ simulation time. This indicates that FHE-DiSNN exhibits a considerable level of tolerance toward the noise, which is a highly intriguing experimental finding.

\section{Conclusion}
This paper serves as an initial exploration of the fusion of SNN(Spiking Neural Networks) with homomorphic encryption and presents a research avenue brimming with boundless potential. This innovation facilitates us in terms of both low energy consumption from the machine learning side and data privacy from a security point of view. We offer an estimation of the maximum upper bound for discretization error in DiSNN and substantiate its expected error to be $\lambda/4$ from a mathematical expectation perspective. Experimental results further validate this finding. Furthermore, we leverage TFHE bootstrapping to construct \textbf{FHE-Fire} and \textbf{FHE-Reset} functions, enabling support for computations of unlimited depth. Besides, our proposed framework is easy to scalable and extended to more complicated neural models. 

However, there still remain some challenging tasks for further research, such as more complex neuron equations, different encoding methods, parallel computing, and so on. Besides, as highlighted by Proposition \ref{timeconsume}, Poisson encoding introduces numerous bootstrapping operations (equal to the dimension of input data), which can have a high evaluation time. And this is also one of our future directions.

\bibliographystyle{splncs04}
\bibliography{main}

\begin{appendix}

\section*{Appendix A: SNN}\label{AppendixA}
\subsection*{Encoding Strategies}\label{A.1}
To handle various stimulus patterns, SNNs often use abundant coding methods to process the input stimulus. At present, the most common neural coding methods mainly include rate coding, temporal coding, bursting coding and population coding \cite{Georgopoulos1986NeuronalPC}. For visual recognition tasks, rate coding is a popular scheme. Rate coding\cite{Wu2017SpatioTemporalBF,Zhan2021EffectiveTL} is mainly based on spike counting, and Poisson distribution can describe the number of random events occurring per unit of time.

In our study, the inputs are encoded into rate-based spike trains by the Poisson process. Given a time interval $\Delta t$ in advance, then the reaction time is divided into $T$ intervals evenly. During each time step t, a random matrix $M_t$ is generated using uniform distribution in $[0,1]$. Then, we compare the original normalized pixel matrix $X_o$ with $M_t$ to determine whether the current time $t$ has a spike or not. The final encoding spike train $X$ is calculated by using the following equation:
\begin{equation*}
   X(i,j)= 
   \begin{cases}
   0, &\ X_o(i,j) \leq M_t(i,j)\\
   1, &\ X_o(i,j) > M_t(i,j)
   \end{cases}
\end{equation*}
where $i$ and $j$ are the coordinates of the pixel points in the
images. In this way, the encoded spike trains follow the Poisson distribution.

\subsection*{Training methods}\label{A.2}
Supervised learning algorithms for deep spiking neural networks mainly include indirectly supervised learning algorithms represented by ANN-converted SNN and directly supervised algorithms represented by spatiotemporal back-propagation. However, the lack of differentiability of the spike function is still a problem we have to confront. At present, a common solution is to use a continuous function similar to it to replace the spike function or its derivative, which is called surrogate gradient method, resulting in a spike-based BP algorithm. Wu et al.\cite{Wu2017SpatioTemporalBF} introduce four curves to 
approximate the derivative of spike activity denoted by $f_1, f_2, f_3, f_4$ as follow. 
$$
\begin{aligned}
& f_1(V)=\frac{1}{a_1} \operatorname{sign}\left(\left|V-V_{th}\right|<\frac{a_1}{2}\right), \\
& f_2(V)=\left(\frac{\sqrt{a_2}}{2}-\frac{a_2}{4}\left|V-V_{t h}\right|\right) \operatorname{sign}\left(\frac{2}{\sqrt{a_2}}-\left|V-V_{t h}\right|\right), \\
& f_3(V)=\frac{1}{a_3} \frac{e^{\frac{V_{th}-V}{a_3}}}{\left(1+e^{\frac{V_{th}-V}{a_3}}\right)^2}, \\
& f_4(V)=\frac{1}{\sqrt{2 \pi a_4}} e^{-\frac{\left(V-V_{t h}\right)^2}{2 a_4}},
\end{aligned}
$$

In general, SNN training follows the three principles: (1)The output of spiking neurons is binary and can be affected by noise. The firing frequency over time is used to represent the response strength of a category for classification. (2)The goal is for only the correct neuron to fire at the highest frequency while others remain silent. MSE loss is often used and has shown better results. (3)The network state needs to be reset after each simulation.
It is important to note that training such an SNN requires a linear amount of memory and simulation time $T$. A larger $T$corresponds to a smaller simulation time step and more "fine-grained" training, but it does not necessarily result in better training performance. When $T$ is too large, the SNN unfolds into a very deep network in terms of time, which can lead to gradient vanishing or exploding during gradient computation.

\vspace{0.2cm}
\begin{minipage}[c]{0.5\textwidth}
		\centering
		\setlength{\tabcolsep}{0.5 mm}
		\scalebox{0.9}{
    \begin{tabular}{l}
    \toprule
    Load the train data to traindata\_loader \\
    \bf{for img, label in traindata\_loader:}\\
    \quad  label\_onehot = onehot\_encode(label) \\
    \quad \bf{for 1:T do} \\
    \quad \quad img\_encode = Poisson\_Encode(img) \\
    \quad \quad I = SNN.Linea(img\_encode) \\
    \quad \quad spike = SNN.Atan(I) \\
    \quad \quad output += spike \\
    \quad output /= T \\
    \quad loss = MSEloss(output, label\_onehot) \\
    \quad loss.backward() \\
    \quad optimizer.step() \\
    \bottomrule
    \end{tabular}
	}
    \captionof{table}{Train Process}
    \label{tab_train_process}

\end{minipage}
\begin{minipage}[c]{0.5\textwidth}
		\centering
		\setlength{\tabcolsep}{0.5mm}
		\vspace{-1mm}
		\scalebox{0.85}{\begin{tabular}{l}
    \toprule
    Load the test data to testdata\_loader \\
    \bf{for img in testdata\_loader:}\\
    \quad \bf{for 1:T do} \\
    \quad \quad img\_encode = Poisson\_Encode(img) \\
    \quad \quad I = SNN.Linea(img\_encode) \\
    \quad \quad spike = SNN.Atan(I) \\
    \quad \quad output += spike \\
    \quad result = argmax(output) \\
    \bottomrule
    \end{tabular}}
    \captionof{table}{Test Process}
    \label{tab_test_process}
\end{minipage}

Table\ref{tab_train_process} and Table\ref{tab_test_process} present the pseudocode for the training and testing processes, respectively, providing a more intuitive representation of their distinctions.

The longer the simulation time $T$, the higher the testing accuracy. However, increasing $T$ results in a significant increase in time consumption, especially when dealing with encrypted data.

\section*{Appendix B: Homomorphic Encryption shemes} \label{HomomorphicEncryption}
\noindent\textbf{LWE} We recall that a LWE(Learning With Errors)\cite{regevLatticesLearningErrors2005} ciphertext encrypting $m \in Z_p$ has the form:\\
$$LWE_s(m)=(\mathbf{a}, b)=\left(\mathbf{a},\langle\mathbf{a}, \mathbf{s}\rangle+e+ \left\lfloor\frac{q}{p} m\right\rceil\right) \bmod q $$
here, $\mathbf{a} \in \mathbb{Z}_q^n$ and $b \in \mathbb{Z}_q$, and the keys are vectors $\mathbf{s} \in \mathbb{Z}_q^n$. The ciphertext $(\mathbf{a}, b)$ is decrypted using:
$$
\left\lfloor \frac{p}{q}(b-\langle\mathbf{a}, \mathbf{s}\rangle ) \right\rceil \bmod p=\left\lfloor m+\frac{p}{q} e\right\rceil=m
.$$

\noindent\textbf{Note}. The function $\left\lfloor \cdot \right\rceil$ has the nice property: if $m_0 \equiv m_1 \bmod p$, then 
$
\left\lfloor\frac{q}{p} m_0\right\rceil \equiv \left\lfloor\frac{q}{p} m_1\right\rceil \bmod q
$
, so the operations on message polynomials mod $p$ are realized when computing ciphertext modulo $q$.
\\
\\
\noindent\textbf{RLWE}
 An RLWE(Ring Learning With Errors)\cite{lyubashevskyIdealLatticesLearning2012} ciphertext of a message $m(X) \in R_{p,N}$ can be obtained as follows:
$$R L W E_s(m(X)) =  \left(a(X), b(X)\right), \text{where } b(X) = a(X) \cdot s(X)+e(X)+ \left\lfloor \frac{q}{p} m(X) \right\rceil,$$
where $a(X) \leftarrow R_{q,N}$ is chosen uniformly at random, and $e(X) \leftarrow \chi_\sigma^n$ is chosen from a discrete Gaussian distribution with parameter $\sigma$. The decryption algorithm for RLWE is similar to LWE.
\\
\\
\noindent\textbf{GSW}
GSW\cite{gentryHomomorphicEncryptionLearning2013} is recognized as the first third-generation FHE scheme, and in practice, the RGSW\cite{gentryHomomorphicEncryptionLearning2013} variant is commonly used. Given a plaintext $m \in \mathbb{Z}_p$, the plaintext $m$ is embedded into a power of a polynomial to obtain $X^m \in R_{p,N}$, which is then encrypted as $RGSW(X^m)$. RGSW enables efficient computation of homomorphic multiplication, denoted as $\diamond$, while effectively controlling noise growth:
$$
\begin{aligned}
& R G S W\left(X^{m_0}\right) \diamond R G S W\left(X^{m_1}\right)=R G S W\left(X^{m_0+m_1}\right),\\
& R L W E\left(X^{m_0}\right) \diamond R G S W\left(X^{m_1}\right)=R L W E\left(X^{m_0+m_1}\right).
\end{aligned}
$$
Note that the first multiplication involves two RGSW ciphertexts, whereas the second multiplication operates on an RLWE ciphertext and an RGSW ciphertext and is used in bootstrapping.

\end{appendix}

\end{document}